\DeclarePairedDelimiter{\ceil}{\lceil}{\rceil}
\begin{document}

\title{A Linear-Time 1.5-Approximation for Broadcasting in k-Cycle Graphs}

\author{Jeffrey Bringolf \inst{1}\orcidID{0009-0003-0928-9146} \and
Anne-Laure Ehresmann\inst{2}\orcidID{0000-0002-6116-3212} \and
Hovhannes A. Harutyunyan\inst{1}\orcidID{0000-0001-7260-4186}}
\authorrunning{J. Bringolf, A.-L. Ehresmann, and H. A. Harutyunyan}

\institute{Department of Computer Science and Software Engineering, Concordia \email University, Montreal, Canada \\ \email{bringolfj@gmail.com}\\ \email{haruty@cs.concordia.ca} \and
YesWeHack, Singapore, Singapore \\ \email{al@ehresmann.eu}}

\maketitle     

\begin{abstract}
Broadcasting is an information dissemination primitive where a message originates at a node (called the originator) and is passed to all other nodes in the network. Broadcasting research is motivated by efficient network design and determining the broadcast times of standard network topologies. Verifying the broadcast time of a node $v$ in an arbitrary network $G$ is known to be NP-hard. Additionally, recent findings show that the broadcast time problem is NP-hard in several highly restricted subfamilies of cactus graphs. The most restrictive of these families is known as \emph{$k$-cycle graphs} or \emph{flower graphs} and is the focus of this paper. We present a simple $(1.5-\epsilon)$-approximation algorithm for determining the broadcast time of networks modeled using $k$-cycle graphs, where $\epsilon > 0$ depends on the structure of the graph.
\keywords{Interconnection networks \and Information dissemination \and Broadcasting \and Approximation algorithms \and k-cycle graph \and Cactus graph}
\end{abstract}
\section{Introduction}
\label{sec:introduction}
Broadcasting is a crucial process for information dissemination in interconnected networks. As a result, a variety of network models have been studied at length \cite{problems-of-communication,limited-memory-broadcasting,gossiping-chapter,broadcasting-unicyclic-graphs,gossiping,interconnection-networks,general-graph-approximation}. These models have varying rules describing the means by which information can be disseminated. They may have different numbers of originators, numbers of receivers at each time unit, distances of each call, numbers of destinations, and other characteristics of the network such as knowledge of the neighbourhood available to each node. In this paper, we will focus on the classical model of broadcasting, sometimes referred to as \emph{telephone broadcasting}. The network is modeled as an undirected connected graph $G = (V, E)$, where $V(G)$ and $E(G)$ denote the vertex set and the edge set of $G$, respectively. The classical model is described by the following assumptions.
\begin{enumerate}
    \item The broadcasting process is split into discrete time units.
    \item The only vertex with the message before the first time unit is called the \emph{originator}.
    \item In each time unit, an informed vertex (\emph{sender}) can \emph{call} at most one of its uninformed neighbors (\emph{receiver}).
    \item During each time unit, all calls are performed in parallel.
    \item The process halts as soon as all the vertices in the graph are informed.
\end{enumerate}

Each call in the broadcasting process can be defined as $(u, v)$ where $u$ is the sender and $v$ is the receiver. The \emph{broadcast scheme} is represented by an ordered list containing sets of calls made in each time unit throughout the broadcast.
\begin{definition}\label{broadcast-time-definition}
    The broadcast time of a vertex $v$ in a given graph $G$ is the minimum number of time units required to broadcast in $G$ if $v$ is the originator and is denoted by $b(v, G)$. The broadcast time of a given graph $G$ is the maximum broadcast time from any originator in $G$, formally $b(G) = max_{v \in V(G)}\{b(v,G)\}$.
\end{definition}
The general broadcast time decision problem is formally defined as follows. 
\begin{definition}\label{broadcast-time-decision-definition}
Given a graph $G = (V, E)$ with a specified set of vertices $V_0 \subseteq V$ and a positive integer $k$, is there a sequence $V_0,E_1,V_1,E_2,V_2,...,E_k,V_k$ where $V_i \subseteq V_{i+1} \subseteq V, E_{i+1} \subseteq E, (0 \leq i < k)$, for every $(u,v) \in E_i, u \in V_{i-1}, v \in V_i \setminus V_{i-1}$ and $V_k = V$. For every $v \in V_i \setminus V_{i-1}$, there exists an edge $(u, v) \in E_i$. For every $u \in V_{i-1}$ and $v_1,v_2 \in V_i$, $\{(u,v_1),(u,v_2)\} \subseteq E_i \implies v_1=v_2$. Here, $k$ is the total broadcast time, $V_i$ is the set of informed vertices at round $i$, and $E_i$ is the set of edges used at round $i$. Clearly, when $|V_0|= 1$, this problem becomes our broadcast problem of determining $b(v, G)$ for an arbitrary vertex $v$ in an arbitrary graph $G$. 
\end{definition}

In this paper,  rather than the standard notation $b(v, G)$, we denote the optimal broadcast time of a vertex $v$ in graph $G$ as $t_{opt}(v)$, and the actual broadcast time under a broadcast scheme generated by our algorithm as $t_A(v)$. This avoids any confusion caused by referring to the broadcast time under our algorithm as $b(v, G)$, since in general, it is not optimal. $G$ is not included in the notation as it can be inferred from the surrounding context.

The broadcast time decision problem is NP-complete in an arbitrary graph \cite{computers-and-intractability,broadcasting-in-trees}. Furthermore, the problem was shown to be NP-complete in some restricted families of graphs, such as 3-regular planar graphs \cite{3-planar-graph-broadcasting}. The study of the parameterized complexity of the broadcast time problem was initiated in \cite{parameterized-complexity-general-tcs} and has progressed in the following years \cite{parameterized-complexity-general,parameterized-complexity-tw-2}. Moreover, there are few classes of graphs for which an exact polynomial-time algorithm is known. For example, exact linear-time algorithms exist for: trees \cite{broadcasting-in-trees-A.Prosk,broadcasting-in-trees}, unicyclic connected graphs \cite{linear-unicyclic-broadcasting,broadcasting-unicyclic-graphs}, necklace graphs (chain of rings) \cite{chains-of-rings}, and Harary-like graphs \cite{efficient-harary-like-graphs,harary-like-graphs}.

The intractability of the broadcast time problem has led to significant work on approximation algorithms. Ravi devised a polynomial-time ($O(\frac{\log^2n}{\log(\log n)})$)-approximation algorithm for the general broadcast time problem in \cite{ravi-approximation}. This approximation ratio was improved to $O(\log n)$ in \cite{bar-noy} by Bar-Noy et al. Subsequently, Elkin and Kortsarz again improved the approximation ratio to the current best known approximation in general graphs of $O(\frac{\log n}{\log(\log n)})$ \cite{logn-general-approximation,best-general-approximation}. Additionally, the general broadcast time problem was proved to be $\frac{3}{2}$-inapproximable in \cite{inapproximable-lb}. This was increased to a $(3-\epsilon)$-inapproximability result in \cite{logn-general-approximation}. Due to the significant advancements in the approximation of the general broadcast time problem, attention has been drawn to more constrained classes of graphs. 

Cactus graphs are a class of connected graphs in which any two simple cycles share at most one vertex. They have been a significant subject of interest in the field of graph algorithms as they are a simple class of graphs for which NP-hard problems sometimes have polynomial solutions. An exact algorithm to solve the broadcast time problem in cactus graphs was given in \cite{polynomial-k-restricted-cactus-broadcast}. The complexity of the algorithm on a cactus graph $G=(V, E)$ is $O(kk! \cdot m+n)$ where $m=|E|$, $n=|V|$, and $k$ is the maximum number of cycles that share a vertex. In a \emph{$k$-restricted cactus graph}, where $k$ is held constant, the algorithm is polynomial. This paper focuses on a very simple subfamily of cactus graphs, called \emph{k-cycle graphs} or \emph{flower graphs}, which are formed by merging $k$ cycles at a single vertex.

Very recent results have progressed through ever-more-constrained families of cactus graphs resulting in the determination of the complexity of the broadcast time problem in $k$-cycle graphs. First, Aminian et al. showed that the broadcast time problem is NP-complete in so-called "snowflake graphs" \cite{cactus-telephone-broadcasting}. Snowflake graphs are a subclass of cactus graphs that have a pathwidth of at most 2. Thus, they showed that the broadcast time problem is NP-complete in both general cactus graphs and general graphs with pathwidth at most 2. See Figure~\ref{fig:snowflake-graph-example} for an example of a snowflake graph.

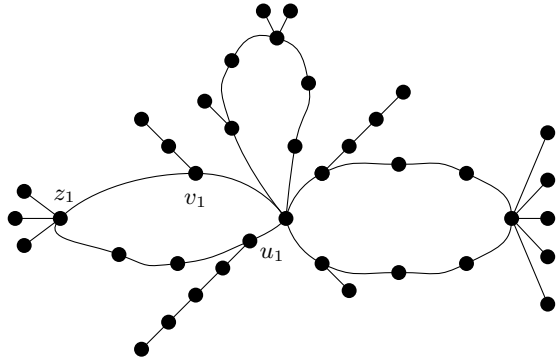
\begin{figure}[htb]
    \centering
    
    \begin{tikzpicture}[scale=1.2, every node/.style={circle, fill=black, inner sep=2pt}]
    
    \node (s) at (0,0) {};
    
    \node (v1) at (-1, 0.5) {};
    \node[fill=none] () at (-1, 0.2) {$v_1$};
    \node (z1) at (-2.5, 0) {};
    \node[fill=none] () at (-2.45, 0.25) {$z_1$};
    \node (c1_2) at (-1.85, -0.4) {};
    \node (c1_3) at (-1.2, -0.5) {};
    \node (u1) at (-0.4, -0.25) {};
    \node[fill=none] () at (-0.15, -0.4) {$u_1$};
    \draw[solid] plot [smooth, tension=1] coordinates { (s) (v1) (z1) (c1_2) (c1_3) (u1) (s) };

    \node (z1_a) at (-2.9, 0.3) {};
    \node (z1_b) at (-3, 0) {};
    \node (z1_c) at (-2.9, -0.3) {};
    \draw[solid] plot [smooth, tension=1] coordinates { (z1) (z1_a) };
    \draw[solid] plot [smooth, tension=1] coordinates { (z1) (z1_b) };
    \draw[solid] plot [smooth, tension=1] coordinates { (z1) (z1_c) };

    \node (v1_1) at (-1.3, 0.8) {};
    \node (v1_2) at (-1.6, 1.1) {};
    \draw[solid] plot [smooth, tension=1] coordinates { (v1) (v1_1) };
    \draw[solid] plot [smooth, tension=1] coordinates { (v1_1) (v1_2) };

    \node (u1_1) at (-0.7, -0.55) {};
    \node (u1_2) at (-1, -0.85) {};
    \node (u1_3) at (-1.3, -1.15) {};
    \node (u1_4) at (-1.6, -1.45) {};
    \draw[solid] plot [smooth, tension=1] coordinates { (u1) (u1_1) };
    \draw[solid] plot [smooth, tension=1] coordinates { (u1_1) (u1_2) };
    \draw[solid] plot [smooth, tension=1] coordinates { (u1_2) (u1_3) };
    \draw[solid] plot [smooth, tension=1] coordinates { (u1_3) (u1_4) };

    \node (v2) at (0.4, 0.5) {};
    \node (v2) at (0.4, 0.5) {};
    \node (c2_1) at (1.25, 0.6) {};
    \node (c2_2) at (2, 0.5) {};
    \node (z2) at (2.5, 0) {};
    \node (c2_4) at (2, -0.5) {};
    \node (c2_5) at (1.25, -0.6) {};
    \node (u2) at (0.4, -0.5) {};
    \draw[solid] plot [smooth, tension=1] coordinates { (s) (v2) (c2_1) (c2_2) (z2) (c2_4) (c2_5) (u2) (s) };

    \node (z2_1) at (2.9, 0.95) {};
    \node (z2_2) at (2.9, 0.425) {};
    \node (z2_3) at (2.9, 0) {};
    \node (z2_4) at (2.9, -0.425) {};
    \node (z2_5) at (2.9, -0.95) {};
    \draw[solid] plot [smooth, tension=1] coordinates { (z2) (z2_1) };
    \draw[solid] plot [smooth, tension=1] coordinates { (z2) (z2_2) };
    \draw[solid] plot [smooth, tension=1] coordinates { (z2) (z2_3) };
    \draw[solid] plot [smooth, tension=1] coordinates { (z2) (z2_4) };
    \draw[solid] plot [smooth, tension=1] coordinates { (z2) (z2_5) };

    \node (v2_1) at (0.7, 0.8) {};
    \node (v2_2) at (1, 1.1) {};
    \node (v2_3) at (1.3, 1.4) {};
    \draw[solid] plot [smooth, tension=1] coordinates { (v2) (v2_1) };
    \draw[solid] plot [smooth, tension=1] coordinates { (v2_1) (v2_2) };
    \draw[solid] plot [smooth, tension=1] coordinates { (v2_2) (v2_3) };

    \node (u2_1) at (0.7, -0.8) {};
    \draw[solid] plot [smooth, tension=1] coordinates { (u2) (u2_1) };

    \node (v3) at (-0.6, 1) {};
    \node (c3_2) at (-0.6, 1.75) {};
    \node (z3) at (-0.1, 2) {};
    \node (c3_5) at (0.25, 1.5) {};
    \node (u3) at (0.1, 0.8) {};
    \draw[solid] plot [smooth, tension=1] coordinates { (s) (v3) (c3_2) (z3) (c3_5) (u3) (s) };

    \node (z3_1) at (-0.25, 2.3) {};
    \node (z3_2) at (0.05, 2.3) {};
    
    \draw[solid] plot [smooth, tension=1] coordinates { (z3) (z3_1) };
    \draw[solid] plot [smooth, tension=1] coordinates { (z3) (z3_2) };

    \node (v3_1) at (-0.9, 1.3) {};
    \draw[solid] plot [smooth, tension=1] coordinates { (v3) (v3_1) };

    
    \end{tikzpicture}
    
    \caption{An example of a snowflake graph: a graph consisting of a $k$-cycle graph with paths originating from the vertices adjacent to the central vertex. One vertex on each cycle can also have any number of additional adjacent vertices. Snowflake graphs are cactus graphs with pathwidth at most 2 \cite{cactus-telephone-broadcasting}.} 
    \label{fig:snowflake-graph-example}
\end{figure}

Next, Egami et al. proved that the broadcast time problem is NP-complete in cactus graphs that are vertex deletion distance 1 to a path forest \cite{broadcasting-struct-restrictions} (see Figure~\ref{fig:structural-restrictions-figure}). The reduction was performed from a restricted version of the numerical matching with target sums problem (NMTS), which simplified the reduction in \cite{cactus-telephone-broadcasting}. Figure~\ref{fig:structural-restrictions-figure} is an example of a graph resulting from their reduction. This class is the set of $k$-cycle graphs with paths of arbitrary length originating from the central vertex.

\begin{figure}[htb]
    \centering
    
    \begin{tikzpicture}[scale=1.1, every node/.style={circle, fill=black, inner sep=2pt}]
    
    \node (s) at (0,0) {};
    
    \node (v1) at (-1, 1) {};
    \node (c1_1) at (-1.7, 0.5) {};
    \node (c1_2) at (-2, 0) {};
    \node (c1_3) at (-1.7, -0.5) {};
    \node (u1) at (-1, -1) {};
    \draw[solid] plot [smooth, tension=1] coordinates { (s) (v1) (c1_1) (c1_2) (c1_3) (u1) (s) };

    \node (v2) at (0.8, 1) {};
    \node (c2_0) at (1.35, 1.25) {};
    \node (c2_1) at (1.7, 1) {};
    \node (c2_2) at (2, 0.6) {};
    \node (c2_3) at (2.2, 0) {};
    \node (c2_4) at (2, -0.6) {};
    \node (c2_5) at (1.7, -1) {};
    \node (c2_6) at (1.4, -1.2) {};
    \node (u2) at (0.8, -1) {};
    \draw[solid] plot [smooth, tension=1] coordinates { (s) (v2) (c2_0) (c2_1) (c2_2) (c2_3) (c2_4) (c2_5) (c2_6) (u2) (s) };

    \node (Q1_1) at (-0.2, 0.7) {};
    \node (Q1_2) at (-0.4, 1) {};
    \node (Q1_3) at (-0.6, 1.3) {};
    \node (Q1_4) at (-0.8, 1.6) {};
    \node (Q1_5) at (-1, 1.9) {};
    \node (Q1_6) at (-1.2, 2.2) {};
    \node (q1) at (-1.4, 2.5) {};
    \draw[solid] plot [smooth, tension=1] coordinates { (s) (Q1_1)  (Q1_2) (Q1_3) (Q1_4) (Q1_5) (Q1_6) (q1)};
    
    \node (Q2_2) at (0, 1.0) {};
    \node (Q2_3) at (0, 1.3) {};
    \node (Q2_4) at (0, 1.6) {};
    \node (Q2_5) at (0, 1.9) {};
    \node (Q2_6) at (0, 2.2) {};
    \node (q2) at (0, 2.5) {};
    \draw[solid] plot [smooth, tension=1] coordinates { (s)  (Q2_2) (Q2_3) (Q2_4) (Q2_5) (Q2_6) (q2)};
    
    \node (qt) at (0, -1) {};
    \draw[solid] plot [smooth, tension=1] coordinates { (s) (qt)};
    
    
    \end{tikzpicture}
    
    \caption{An example of a graph created by the reduction in \cite{broadcasting-struct-restrictions}. The graph consists of a k-cycle graph with paths connected to the central vertex.}
    \label{fig:structural-restrictions-figure}
\end{figure}
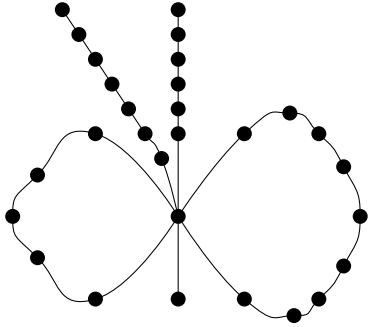

Finally, the broadcast time problem was shown to be NP-complete in $k$-cycle graphs in \cite{bringolf2026hardnessapproximationbroadcastingstructured}. The reduction is similar to the one in \cite{broadcasting-struct-restrictions}, but originates from a more restricted version of NMTS to eliminate the need for paths starting at the central vertex. The same paper also introduces a Polynomial-Time Approximation Scheme (PTAS) for the broadcast time problem in $k$-cycle graphs.  

In contrast, this paper presents a simple linear-time $(1.5-\epsilon$)-approximation algorithm for the broadcast time problem in $k$-cycle graphs. By $(1.5-\epsilon)$, we mean an approximation ratio strictly less than $1.5$. In other words, for any given graph, the algorithm's approximation ratio is $1.5-\epsilon$, where $\epsilon > 0$ depends on the structure of the graph. Although the PTAS mentioned above can achieve a better approximation, reaching a 1.5-approximation would incur a runtime complexity of $O(2^{64}n^2\log{n}) \approx O(10^{19}n^2\log{n})$, where $n$ is the number of vertices in the graph. In comparison, our algorithm achieves this approximation with a runtime complexity of $2n$. This, along with its simplicity, makes our algorithm a very practical and relevant option for implementation. It also improves the previously best-known fixed-ratio approximation guarantee of 2, presented in \cite{k-cycle-2-approximation}.

The paper is organized as follows. In Section~\ref{sec:k-cycle-graphs}, we describe the family of $k$-cycle graphs and discuss previous related findings. In Section~\ref{sec:simple-k-cycle}, we present our approximation algorithm. In Sections \ref{sec:broadcasting-case1} and \ref{sec:broadcasting-case2}, we provide proofs of our proposed algorithm's approximation ratio. In Section~\ref{sec:particular-cases}, we present infinite graph classes for which our algorithm is exact or approaches the approximation ratio. Finally, we conclude the paper in Section~\ref{sec:conclusion}.
\section{k-Cycle Graphs}
\label{sec:k-cycle-graphs}

In this section, we discuss a subfamily of cactus graphs that are usually referred to as \emph{$k$-cycle} graphs or \emph{flower graphs}. A $k$-cycle graph $G = (C_1,C_2,...,C_k)$ is obtained by connecting $k$ cycles of arbitrary lengths at a single vertex $v_c$, called the central vertex.

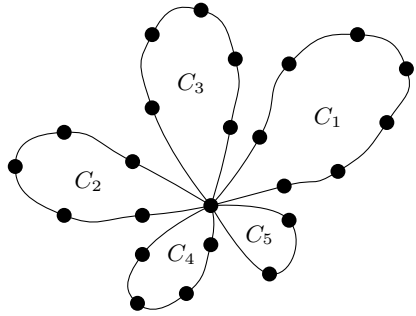
\begin{figure}[htb] 
    \centering
    
    \begin{tikzpicture}[scale=1.3, every node/.style={circle, fill=black, inner sep=2pt}]
    
    \node (s) at (0,0) {};

    \node[fill=none] (v_c) at (-0.05, 0.25) {$v_c$};
    
    \node (v1) at (-0.7, -0.5) {};
    \node (z1) at (-0.75, -1) {};
    \node (c1_2) at (-0.25, -0.9) {};
    \node (u1) at (0, -0.4) {};
    \draw[solid] plot [smooth, tension=1] coordinates { (s) (v1) (z1) (c1_2) (u1) (s) };
    \node[fill=none] (c4_name) at (-0.3, -0.5) {$C_4$};

    \node (v2) at (0.8, -0.15) {};
    \node (u2) at (0.6, -0.7) {};
    \draw[solid] plot [smooth, tension=1] coordinates { (s) (v2) (u2) (s) };
    \node[fill=none] (c3_name) at (0.5, -0.3) {$C_5$};
    
    \node (v3) at (-0.6, 1) {};
    \node (c3_2) at (-0.6, 1.75) {};
    \node (z3) at (-0.1, 2) {};
    \node (c3_5) at (0.25, 1.5) {};
    \node (u3) at (0.2, 0.8) {};
    \draw[solid] plot [smooth, tension=1] coordinates { (s) (v3) (c3_2) (z3) (c3_5) (u3) (s) };
    \node[fill=none] (c3_name) at (-0.2, 1.25) {$C_3$};

    \node (c4_1) at (-0.8, 0.45) {};
    \node (c4_2) at (-1.5, 0.75) {};
    \node (c4_3) at (-2, 0.4) {};
    \node (c4_4) at (-1.5, -0.1) {};
    \node (c4_5) at (-0.7, -0.1) {};
    \draw[solid] plot [smooth, tension=1] coordinates { (s) (c4_1) (c4_2) (c4_3) (c4_4) (c4_5) (s)};
    \node[fill=none] (c2_name) at (-1.25, 0.25) {$C_2$};

    \node (c4_0) at (0.5, 0.7) {};
    \node (c4_1) at (0.8, 1.45) {};
    \node (c4_2) at (1.5, 1.75) {};
    \node (c4_3) at (2, 1.4) {};
    \node (c4_4) at (1.8, 0.85) {};
    \node (c4_5) at (1.3, 0.35) {};
    \node (c4_6) at (0.75, 0.2) {};
    \draw[solid] plot [smooth, tension=1] coordinates { (s) (c4_0) (c4_1) (c4_2) (c4_3) (c4_4) (c4_5) (c4_6) (s)};
    \node[fill=none] (c1_name) at (1.2, 0.9) {$C_1$};

    
    \end{tikzpicture}
    
    \caption{An example of a $k$-cycle graph with $k=5$ and $l_1=7, l_2=5, l_3=5, l_4=4, l_5=2$.}
\end{figure}

We assume that cycles are indexed in non-increasing order of their lengths. Formally, $l_1 \geq l_2 \geq ... \geq l_k \geq 2$, where $l_i$ is the number of vertices on cycle $C_i$ (excluding $v_c$) for all $1 \leq i \leq k$.

As noted above, the broadcast time problem has been researched at length for cactus graphs and graphs with pathwidth at most 2 in \cite{cactus-telephone-broadcasting}. The authors also present a polynomial-time 2-approximation algorithm in cactus graphs that finds exact broadcast times using the 2-broadcasting model; a model where vertices can make 2 calls per time unit. They exploit the structure of cactus graphs to recursively compute the broadcast times of disjoint connected components induced by the removal of informed vertices. This extends the previous best approximation ratio of 2 in $k$-cycle graphs found in \cite{k-cycle-2-approximation} to general cactus graphs. 

The main algorithm introduced in \cite{k-cycle-2-approximation} ($S_{\text{cycle}}$) maintains 2 lists of cycles sorted in non-increasing order of the number of uninformed vertices. The lists contain all cycles that have been called once, or that have not yet been called, respectively. In each time unit, it compares the number of uninformed vertices in the first cycle of each list to determine which to inform. In this paper, we design a simple algorithm that achieves a better approximation considering only the lengths of cycles. In \cite{k-cycle-2-approximation}, the authors also presented the following lower bounds for the broadcast time problem in $k$-cycle graphs, which we use in our proof.
\begin{lemma}\label{lma:lower_bounds_lemma}
(\cite{k-cycle-2-approximation}). Let $G_k$ be a $k$-cycle graph where the originator is the central vertex $u$. Then 
\begin{enumerate}
    \item $b(u) \geq k+1$ 
    \item $b(u) \geq \ceil*{\frac{l_j+2j-1}{2}}$, $1 \leq j \leq k$.
\end{enumerate}
\end{lemma}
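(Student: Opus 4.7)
The plan is to exploit that $v_c$ is the only connection between the $k$ cycles, so $v_c$ must send a first message into each cycle, and it can make only one call per time unit; hence the $k$ cycles' first-call times from $v_c$ are $k$ distinct positive integers.

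For part (1), the latest such first-call time is therefore $\geq k$. At that moment only one vertex of the corresponding cycle is informed, and since $l_k \geq 2$ there is still at least one uninformed vertex, requiring one more round, so $b(u) \geq k + 1$.

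For part (2), fix $j$ and restrict attention to the $j$ largest cycles $C_1, \ldots, C_j$ (all with $l_i \geq l_j$). Since their first-call times from $v_c$ are $j$ distinct positive integers, the maximum is $\geq j$, so some $C_{i_0}$ with $i_0 \leq j$ first receives a message from $v_c$ at some time $t_1 \geq j$. I then split on whether $v_c$ ever calls into $C_{i_0}$ a second time. If not, the broadcast must propagate one way around the cycle, and the last vertex is informed no earlier than $t_1 + l_{i_0} - 1 \geq j + l_j - 1$, which exceeds $\lceil (l_j + 2j - 1)/2 \rceil$ since $l_j \geq 2$. Otherwise $v_c$ sends a second call into $C_{i_0}$ at some time $t_2 > t_1$, hence $t_2 \geq j + 1$; both neighbors $u_1, u_{l_{i_0}}$ of $v_c$ on $C_{i_0}$ are informed at $t_1$ and $t_2$, and the interior path advances at most one step per unit from each end, so the earliest time $u_p$ can be informed is $\min(t_1 + p - 1,\; t_2 + l_{i_0} - p)$. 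Maximizing this over $p$ gives completion time at least $\lfloor (t_1 + t_2 + l_{i_0} - 1)/2 \rfloor \geq \lfloor (2j + l_j)/2 \rfloor = \lceil (l_j + 2j - 1)/2 \rceil$.

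The main obstacle is the second subcase of part (2): carefully justifying the meet-in-the-middle lower bound for informing a path from two endpoints at distinct starting times, and then verifying that the resulting floor expression coincides with the stated ceiling bound once one uses $l_{i_0} \geq l_j$ and handles the parities of $l_j$.
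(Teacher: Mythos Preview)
The paper does not supply its own proof of this lemma; it merely cites \cite{k-cycle-2-approximation}. So there is nothing in the present paper to compare against, and the relevant question is simply whether your argument is sound. It is: your proof is the natural one and is correct.

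A couple of remarks on points you flagged as potential obstacles, just to confirm they go through. In Case~2 of part~(2), the lower bound $\tau_p \geq \min(t_1 + p - 1,\; t_2 + l_{i_0} - p)$ is justified because the path $u_1,\ldots,u_{l_{i_0}}$ has only its two endpoints adjacent to $v_c$, so any informing chain reaching $u_p$ must originate at $u_1$ (contributing at least $p-1$ steps after time $t_1$) or at $u_{l_{i_0}}$ (contributing at least $l_{i_0}-p$ steps after time $t_2$). Maximizing over integer $p$ indeed yields $\lfloor (t_1 + t_2 + l_{i_0} - 1)/2 \rfloor$, and since $t_1 \geq j$ and $t_2 \geq t_1 + 1$ give $t_1 + t_2 \geq 2j+1$, together with $l_{i_0} \geq l_j$ you obtain $\lfloor (2j + l_j)/2 \rfloor$. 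The identity $\lfloor (2j + l_j)/2 \rfloor = \lceil (2j + l_j - 1)/2 \rceil$ holds for all integers, so the parity check is immediate. In Case~1, the inequality $j + l_j - 1 \geq \lceil (l_j + 2j - 1)/2 \rceil$ reduces to $l_j \geq 2$, which is guaranteed by the standing assumption $l_k \geq 2$.
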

\begin{lemma}\label{lma:d_gt_0_lowerbound}
    (\cite{k-cycle-2-approximation}) Let $G_k$ be a $k$-cycle graph where the originator is any vertex $w$ on a cycle $C_m$ and the length of the shortest path from $w$ to vertex $u$ is $d$. Then $b(w) \geq d + \ceil*{\frac{l_j+2j-2}{2}}$, $1 \leq j \leq k$, where $j \neq m$.
\end{lemma}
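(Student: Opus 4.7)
The plan is to mirror the proof of part~2 of Lemma~\ref{lma:lower_bounds_lemma} while adjusting for the fact that the originator $w$ sits at distance $d$ from the central vertex $u$. The starting observation is that in the classical model information travels at most one edge per time unit, so $u$ cannot be informed before time $d$; consequently, no vertex lying on a cycle $C_j$ with $j \ne m$ can be informed before time $d+1$, since $u$ is a cut vertex separating any such $C_j$ from $C_m$.

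Next, fix $j$ with $j \ne m$ and let $\mathcal{J} = \{C_i : 1 \le i \le j,\; i \ne m\}$. Every cycle in $\mathcal{J}$ has length at least $l_j$ by the non-increasing ordering, and $|\mathcal{J}| \ge j-1$ (with equality when $m < j$, and $|\mathcal{J}| = j$ when $m > j$). Because $u$ is a cut vertex, every cycle in $\mathcal{J}$ must eventually receive a call from $u$. As $u$ makes at most one call per time unit and its first call is no earlier than time $d+1$, these $|\mathcal{J}|$ calls occur at distinct time units $\ge d+1$, so the last of them, entering some $C^{\ast} \in \mathcal{J}$, occurs at some time $t_1 \ge d + |\mathcal{J}| \ge d + j - 1$.

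To finish, I would bound the completion time of $C^{\ast}$ from below. Write $l^{\ast} \ge l_j$ for its length. Using the standard fact (implicit in the proof of Lemma~\ref{lma:lower_bounds_lemma}) that a cycle of $l^{\ast}$ non-central vertices whose two $u$-neighbors are informed at times $t_1 \le t_2$ cannot be fully informed before time $\lceil (t_1 + t_2 + l^{\ast} - 1)/2 \rceil$, and observing that this quantity is minimized over $t_2 \ge t_1+1$ at the value $t_1 + \lceil l^{\ast}/2 \rceil$ (while in the single-neighbor case the cycle takes $t_1 + l^{\ast} - 1 \ge t_1 + \lceil l^{\ast}/2 \rceil$), we conclude $b(w) \ge t_1 + \lceil l_j/2 \rceil \ge d + (j-1) + \lceil l_j/2 \rceil = d + \lceil (l_j + 2j - 2)/2 \rceil$.

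The main obstacle is establishing the cycle-completion inequality cleanly and handling the one-neighbor and two-neighbor sub-cases uniformly; once that is in hand, the rest is bookkeeping about which cycles $u$ must necessarily call into. Note that when $m > j$ the same argument gives $|\mathcal{J}| = j$ and hence a strictly stronger bound $d + j + \lceil l_j/2 \rceil$, so the claim holds for every $j \ne m$.
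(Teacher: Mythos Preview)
The paper does not prove this lemma itself; it is quoted from \cite{k-cycle-2-approximation} without argument, so there is no in-paper proof to compare against. I therefore evaluate your proposal on its own merits.

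Your overall strategy is the natural one, but the key cycle-completion inequality is off by one. If the two $u$-neighbors of a cycle with $l^{\ast}$ non-central vertices are informed at times $t_1 < t_2$, then at any time $t \ge t_2$ at most $(t-t_1+1)+(t-t_2+1)=2t-t_1-t_2+2$ of its vertices can be informed, so the correct lower bound on the completion time is $\bigl\lceil (t_1+t_2+l^{\ast}-2)/2 \bigr\rceil$, not $\bigl\lceil (t_1+t_2+l^{\ast}-1)/2 \bigr\rceil$. (Sanity check: with $l^{\ast}=3$, $t_1=1$, $t_2=2$ the cycle finishes at time $2$, matching the former and violating the latter.) With the corrected formula and $t_2 \ge t_1+1$, the two-call case yields only completion $\ge t_1+\lfloor l^{\ast}/2\rfloor$. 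When $m>j$ you have $|\mathcal J|=j$, hence $t_1 \ge d+j$, and $d+j+\lfloor l_j/2\rfloor \ge d+(j-1)+\lceil l_j/2\rceil$ still gives the claim. But when $m<j$ you only get $|\mathcal J|=j-1$, hence $d+(j-1)+\lfloor l_j/2\rfloor$, which is one short of the target whenever $l_j$ is odd.

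This shortfall is not repairable: the lemma as transcribed here is in fact false in that regime. Take $k=2$, $l_1=l_2=3$, $m=1$, $d=1$, $j=2$; the statement asserts $b(w)\ge 1+\lceil 5/2\rceil=4$, yet with $C_1=u\,w\,a\,b$ and $C_2=u\,x\,y\,z$ the scheme ``$w\to u$; then $w\to a$, $u\to x$; then $a\to b$, $x\to y$, $u\to z$'' informs all seven vertices by time $3$. What your corrected argument actually proves is $b(w)\ge d+\bigl\lceil (l_j+2j-3)/2\bigr\rceil$ when $m<j$ (and the stronger $d+\bigl\lceil (l_j+2j-1)/2\bigr\rceil$ when $m>j$); your erroneous formula happened to cancel exactly the gap between this true bound and the (mis)stated one.
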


\section{\textsc{Simple-k-cycle} Algorithm}\label{sec:simple-k-cycle}

In this section, we will describe \textsc{Simple-k-cycle}: our approximation algorithm for the broadcast time problem in $k$-cycle graphs (see Algorithm~\ref{alg:simple-k-cycle}). Unlike the 2-approximation algorithm and PTAS described above, our algorithm is very simple. Recall, as stated in the previous section, that by convention, the cycles of our $k$-cycle graph are indexed in non-increasing order of cycle length. Formally, $l_1 \geq l_2 \geq ... \geq l_k \geq 2$. This convention is used implicitly throughout our algorithm and approximation ratio proofs. Our algorithm has two distinct cases: the originator vertex ($u$) is the central vertex $v_c$, or the originator vertex is found on a cycle $C_m$ and is $d \geq 1$ vertices away from $v_c$. When $u=v_c$, $v_c$ calls each cycle $C_i$ in time units $i$ and $i+k$ for all $i$, $1 \leq i \leq k$. 

When $u \neq v_c$ and is found on $C_m$, $d$ vertices away from $v_c$, the algorithm is as follows: $u$ informs its neighbour along the shortest path to $v_c$ in time unit 1. In time unit 2, $u$ informs its other neighbour. In time unit  $d$, $v_c$ is informed by its neighbour on $C_m$. Then, in time units $d+1, ..., d+m-1$, $v_c$ calls cycles $C_1, ..., C_{m-1}$. Then, in time units $d+m, ..., d+k-1$, $v_c$ calls cycles $C_{m+1}, ..., C_k$. $v_c$ calls each cycle $C_i$ for a second time in time unit $d+k-1+i$ for all $i=1, ..., k$.

\begin{algorithm}[htb]
\caption{SIMPLE-K-CYCLE}\label{alg:simple-k-cycle}
\hspace*{\algorithmicindent} \textbf{Input} A $k$-cycle graph $G = \{C_1, ..., C_k\}$ with central vertex $v_c$ and an originator vertex $u$ \\ 
\hspace*{\algorithmicindent} \textbf{Output} A broadcast scheme with broadcast time $t_A(v_c, G)$
\begin{algorithmic}
\Procedure{Simple-K-Cycle}{$G$, $u$}
    \If{$u = v_c$}
        \For {$1 \leq i \leq k$}
            \State $v_c$ calls $C_i$ in time unit $i$
        \EndFor
        \For {$k+1 \leq i \leq 2k$}
            \If{$C_{i-k}$ has any uninformed vertices}
                \State $v_c$ calls $C_{i-k}$ in time unit $i$
            \EndIf
        \EndFor
    \Else
         \State In time unit 1, $u$ informs its neighbour along the shorter path to $v_c$
        \State In time unit 2, $u$ informs its neighbour along the longer path to $v_c$
        \State $v_c$ is informed by its neighbour on $C_m$ in time unit $d$

        \For{$1 \leq i \leq m-1$}
            \State $v_c$ calls $C_i$ in time unit $d+i$
        \EndFor
        \For{$m+1 \leq i \leq k$}
            \State $v_c$ calls $C_i$ in time unit $d+i-1$
        \EndFor
        \For{$1 \leq i \leq k$}
            \State $v_c$ calls $C_i$ in time unit $d+k-1+i$
        \EndFor
    \EndIf
    \State

\EndProcedure
\end{algorithmic}
\end{algorithm}

\paragraph*{Complexity Analysis}
The complexity of \textsc{Simple-K-Cycle} is in $O(k)$. $k<n$, where $n=|V|$, so the complexity of the algorithm is also in $O(n)$. The algorithm iterates over each of the $k$ cycles at most twice. In each iteration, the current cycle is called by $v_c$. Thus, once we have iterated over each cycle twice ($2k$ iterations), we have completely defined our broadcast scheme. Once we have determined our broadcast scheme, we can trivially compute the time unit in which each cycle is fully informed. Specifically, the time unit ($t_i$) in which any particular cycle $C_i$ is fully informed\textemdash if it is called in time units $t_1$ and $t_2$ with $t_1<t_2$\textemdash is calculated as:
    \begin{equation*}
        t_i = \begin{cases}
        t_2-1+\ceil*{\frac{l_i-t_2+t_1}{2}} & C_i \text{ receives 2 calls}\\
        t_1-1+l_i & C_i \text{ receives 1 call}
        \end{cases}
    \end{equation*}
We then require $k-1$ comparisons to determine the last cycle to be fully informed, which gives us the broadcast time of our graph under our approximation algorithm. So, the complexity of \textsc{Simple-K-Cycle} is in $O(k)$ and consequently, is also in $O(n)$. To be precise, the algorithm performs $4k-1$ operations to compute the broadcast time ($2k$ iterations to determine the broadcast scheme, $k$ iterations to compute all $t_i$, and $k-1$ comparisons). Since each cycle must have at least 2 vertices (not including $v_c$), $2k<n$, so $k < \frac{n}{2}$, so our algorithm performs less than $2n$ operations.

\section{Broadcasting from the Central Vertex}
\label{sec:broadcasting-case1}

This section discusses the broadcast time generated by \textsc{Simple-K-Cycle} when the originator is $v_c$. Recall, the algorithm informs cycle $C_i$, $1 \leq i \leq k$ in time units $i$ and $k+i$. See Figure~\ref{fig:case1-example}.

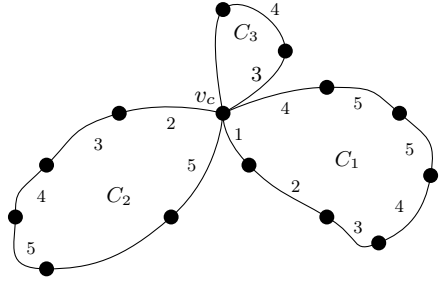
\begin{figure}[htb]
    \centering
    
    \begin{tikzpicture}[scale=1.3, every node/.style={circle, fill=black, inner sep=2pt}]
    
    \node (s) at (0,0) {};
    \node[fill=none] at (-0.17,0.15) {$v_c$};
    
    \node (v1) at (-1, 0) {};
    \node (c1_1) at (-1.7, -0.5) {};
    \node (c1_2) at (-2, -1) {};
    \node (c1_3) at (-1.7, -1.5) {};
    \node (u1) at (-0.5, -1) {};
    \draw[solid] plot [smooth, tension=1] coordinates { (s) (v1) (c1_1) (c1_2) (c1_3) (u1) (s) };

    \node[fill=none, scale=0.9] () at (1.2, -0.45) {$C_1$};
    \node (v2) at (1, 0.25) {};
    \node (c2_1) at (1.7, 0) {};
    \node (c2_2) at (2, -0.6) {};
    \node (c2_4) at (1.5, -1.25) {};
    \node (c2_5) at (1, -1) {};
    \node (u2) at (0.25, -0.5) {};
    \draw[solid] plot [smooth, tension=1] coordinates { (s) (v2) (c2_1) (c2_2) (c2_4) (c2_5) (u2) (s) };

    \node (v2) at (0.6, 0.6) {};
    \node (c2_1) at (0, 1) {};
    \draw[solid] plot [smooth, tension=1] coordinates { (s) (v2) (c2_1) (s) };
    \node[fill=none, scale=0.9] () at (0.225, 0.775) {$C_3$};

    \node[fill=none, scale=0.75] (v2) at (-0.5, -0.1) {2};
    \node[fill=none, scale=0.75] (v2) at (-1.2, -0.3) {3};
    \node[fill=none, scale=0.75] (v2) at (-1.75, -0.8) {4};
    \node[fill=none, scale=0.75] (v2) at (-1.85, -1.3) {5};
    \node[fill=none, scale=0.75] (v2) at (-0.3, -0.5) {5};
    \node[fill=none, scale=0.9] () at (-1, -0.8) {$C_2$};

    \node[fill=none, scale=0.9] (v2) at (0.325, 0.375) {3};
    \node[fill=none, scale=0.75] (v2) at (0.5, 1) {4};

    \node[fill=none, scale=0.75] (v2) at (0.6, 0.05) {4};
    \node[fill=none, scale=0.75] (v2) at (1.3, 0.1) {5};
    \node[fill=none, scale=0.75] (v2) at (1.8, -0.35) {5};
    \node[fill=none, scale=0.75] (v2) at (1.7, -0.9) {4};
    \node[fill=none, scale=0.75] (v2) at (1.3, -1.1) {3};
    \node[fill=none, scale=0.75] (v2) at (0.7, -0.7) {2};
    \node[fill=none, scale=0.75] (v2) at (0.15, -0.2) {1};
    
    \end{tikzpicture}
    
    \caption{An example of a broadcast scheme generated by \textsc{Simple-K-Cycle} on a $k$-cycle graph with $l_1=6, l_2=5, l_3=2$ and originator $v_c$. $v_c$ calls cycle $C_1$ in time units 1 and 4, cycle $C_2$ in time units 2 and 5, and cycle $C_3$ in time unit 3. The broadcast is completed in 5 time units.}
    \label{fig:case1-example}
\end{figure}

\begin{theorem}\label{th:case1}
    Algorithm~\ref{alg:simple-k-cycle} is a polynomial-time $(1.5-\epsilon)$-approximation algorithm for general $k$-cycle graphs when the originator is $v_c$.
\end{theorem}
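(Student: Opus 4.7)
The plan is to upper bound the algorithm's broadcast time $t_A(v_c,G)$ cycle by cycle and then match that bound against the two independent lower bounds on $b(v_c,G)$ from Lemma~\ref{lma:lower_bounds_lemma}. For each cycle $C_i$, the central vertex makes exactly two calls into $C_i$, at time units $i$ and $k+i$, informing the two neighbors of $v_c$ on $C_i$. These two neighbors are connected through $C_i \setminus \{v_c\}$ by a path of length $l_i - 1$, so a standard two-wavefronts-meeting-in-the-middle argument shows that the last vertex of $C_i$ is informed at time
\[
T_i \;=\; \max_{0 \leq d \leq l_i - 1} \min\bigl( i + d,\; k + i + (l_i - 1) - d \bigr)
\;\leq\; i + \frac{k + l_i - 1}{2}.
\]
The degenerate case $k \geq l_i - 1$, in which the first wavefront alone finishes $C_i$ in time $i + l_i - 1$, still satisfies this bound. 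Taking the maximum over $i$ gives $t_A(v_c,G) \leq \max_i \bigl( i + (k + l_i - 1)/2 \bigr)$.

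The crucial observation is that this upper bound splits into precisely the two quantities bounded below by Lemma~\ref{lma:lower_bounds_lemma}:
\[
i + \frac{k + l_i - 1}{2} \;=\; \frac{l_i + 2i - 1}{2} \;+\; \frac{k}{2}.
\]
For each $i$, part (2) of the lemma (applied with $j = i$) gives $(l_i + 2i - 1)/2 \leq b(v_c,G)$, and part (1) gives $k/2 \leq (b(v_c,G) - 1)/2$. Adding the two yields $T_i \leq (3\,b(v_c,G) - 1)/2$ for every $i$, hence
\[
\frac{t_A(v_c,G)}{b(v_c,G)} \;\leq\; \frac{3}{2} - \frac{1}{2\,b(v_c,G)} \;=\; 1.5 - \epsilon,
\]
where $\epsilon = 1/(2\,b(v_c,G)) > 0$ depends only on the graph. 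The polynomial-time claim is immediate from the $O(k) \subseteq O(n)$ complexity analysis already provided, so the theorem follows.

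The main obstacle is the per-cycle analysis of $T_i$: one has to identify the last-informed vertex as the meeting point of the two arcs, verify the closed-form expression when $d^\star = (k + l_i - 1)/2$ is a non-integer or falls outside the admissible range $\{0, \ldots, l_i - 1\}$, and confirm that the uniform upper bound $i + (k + l_i - 1)/2$ is valid across all regimes (in particular when one wavefront is redundant). Once that is in place, the rest is the clean algebraic identity above and a direct invocation of both parts of Lemma~\ref{lma:lower_bounds_lemma}; the strict positivity of $\epsilon$ comes for free from the $+1$ slack in part (1) of the lemma.
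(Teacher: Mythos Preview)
Your argument is correct and in fact cleaner than the paper's. The paper splits into two cases according to whether $t_A(v_c) \geq 2k$ (every cycle receives two calls) or $t_A(v_c) < 2k$ (some cycles receive only one), handling the first case by the same algebraic combination of the two lower bounds you use and the second by a separate contradiction argument. You collapse both cases into one by observing that the per-cycle bound $T_i \leq i + (k+l_i-1)/2$ holds uniformly: when $k \geq l_i - 1$ the first wavefront alone finishes $C_i$ at time $i + l_i - 1$, which is still at most $i + (k+l_i-1)/2$, so the max--min expression needs no case distinction. From there your decomposition $i + (k+l_i-1)/2 = (l_i + 2i - 1)/2 + k/2$ and the direct application of both parts of Lemma~\ref{lma:lower_bounds_lemma} yield exactly the paper's final ratio $1.5 - 1/(2\,t_{opt}(v_c))$, but without the detour through the second case. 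What you gain is brevity and a single unified estimate; what the paper's case split buys is a slightly more explicit picture of which cycles actually receive two calls, which they later reuse in the appendix when characterising instances where the algorithm is optimal.
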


\begin{proof}
    Let $t_A(v_c)$ be the broadcast time generated by Algorithm~\ref{alg:simple-k-cycle}, and $t_{opt}(v_c)$ be the optimal broadcast time for $G$ with originator $v_c$. To calculate $\frac{t_A(v_c)}{t_{opt}(v_c)}$, we will consider two cases: $t_A(v_c) \geq 2k$, and $t_A(v_c) < 2k$.
    \paragraph*{Case 1 ($t_A(v_c) \geq 2k$):}
    If $t_A(v_c) \geq 2k$, then all cycles $C_1, ..., C_k$ get 2 calls from $v_c$. Since cycle $C_i$ is informed by $v_c$ at times $i$ and $k+i$, $k$ vertices in $C_i$ are informed by time $k+i$. Thus, starting at time $k+i$, there are $\max\{0,l_i-k\}$ uninformed vertices in cycle $C_i$. In each time unit starting at time $k+i$, if $C_i$ has any remaining uninformed vertices, two additional vertices are informed in $C_i$. So, it takes $\max\{0, \ceil*{\frac{l_i-k}{2}}\}$ time units to inform the remaining vertices in cycle $C_i$ once it receives the second call from $v_c$ at time unit $k+i$.
    So, for all $i, 1 \leq i \leq k$, such that $l_i \geq k$, cycle $C_i$ is fully informed by time: \\
    \begin{equation} \label{eq:t_i_from_center}
        t_i =k+i-1 + \ceil*{\frac{l_i-k}{2}} \leq \ceil*{\frac{2i-2+k+l_i}{2}}
    \end{equation}
Also, from the second lower bound from Lemma~\ref{lma:lower_bounds_lemma}, for any $1 \leq i \leq k$, we get
    \begin{equation} \label{eq:l_i_lowerbound}
        2t_{opt}(v_c) - 2i + 1 \geq l_i
    \end{equation}
Naturally, $t_A(v_c) = \max_{1\leq i \leq k}\{t_i\}$. By our assumption that $t_A(v_c) \geq 2k$, we know that the final cycle to be fully informed (the cycle $C_i$ with the maximum $t_i$) must be called by $v_c$ twice and thus must have $l_i \geq k$. Combining (\ref{eq:t_i_from_center}), and (\ref{eq:l_i_lowerbound}), we get the following inequality where $i$ is the index that maximizes $\max_{1 \leq i\leq k}\{t_i\}$.
    \begin{align*}
       &t_i = \ceil*{\frac{2i-2+k+l_i}{2}} \leq \ceil*{\frac{2i-2+k+2t_{opt}(v_c)-2i+1}{2}} \\
       &t_i \leq t_{opt}(v_c) + \ceil*{\frac{k-1}{2}}
       \label{eq:t_i_lb_case1}
    \end{align*}
        Additionally from the first lower bound of Lemma~\ref{lma:lower_bounds_lemma}, we have $t_{opt}(v_c)-1 \geq k$. So, we know that $t_A(v_c) = \max_{1 \leq i \leq k}\{t_i\} \leq t_{opt}(v_c) + \ceil*{\frac{t_{opt}(v_c)-2}{2}}$. As a result,
    \begin{align*}
        &\frac{t_A(v_c)}{t_{opt}(v_c)} \leq \frac{t_{opt}(v_c) + \ceil*{\frac{t_{opt}(v_c)-2}{2}}}{t_{opt}(v_c)} \leq 1 + \frac{\ceil*{\frac{t_{opt}(v_c)-2}{2}}}{t_{opt}(v_c)} =  1 + \frac{\ceil*{\frac{t_{opt}(v_c)}{2}}}{t_{opt}(v_c)} - \frac{1}{t_{opt}(v_c)}\\
        &\leq 1 + \frac{0.5t_{opt}(v_c)+0.5}{t_{opt}(v_c)} - \frac{1}{t_{opt}(v_c)} = 1.5 - \frac{0.5}{t_{opt}(v_c)} < 1.5
    \end{align*}
    So, $\frac{t_A(v_c)}{t_{opt}(v_c)} < 1.5$ when $t_A(v_c) \geq 2k$
    \paragraph*{Case 2 ($t_A(v_c) < 2k$):}
     We will prove this case by contradiction. If $t_A(v_c) = k+p$, for some $1 \leq p < k$, then cycles $C_1, ..., C_p$ are informed by $v_c$ twice, while cycles $C_{p+1}, ..., C_k$ are only informed once. Cycles $C_1, ..., C_p$ are informed exactly like in case 1, so they are fully informed before $1.5t_{opt}(v_c)$. Consider cycles $C_{p+1}, ..., C_k$.

    Let $i$ be the smallest value $p+1 \leq i \leq k$ such that cycle $C_i$ is only fully informed at time unit $t_A(v_c)=p+k$. For the sake of contradiction, assume $t_A(v_c)=k+p \geq 1.5t_{opt}(v_c)$. In this case, $C_i$ is fully informed at exactly $t_A(v_c)$, so $t_i=t_A(v_c)$. Additionally, since cycle $C_i$ is informed only once at time $i$, we have $t_A (v_c)= i-1+l_i$. From this and (\ref{eq:l_i_lowerbound}),    
    \begin{align*}
        &t_A(v_c)=p+k=i-1+l_i\\
        &t_A(v_c)=p+k\leq i-1+2t_{opt}(v_c)-2i+1\\
        &t_A(v_c)=p+k\leq 2t_{opt}(v_c)-i
    \end{align*}
    From our assumption that $k+p \geq 1.5t_{opt}(v_c)$ and above,
    \begin{align*}
        &1.5t_{opt}(v_c) \leq p+k \leq 2t_{opt}(v_c)-i
    \end{align*}
    Then, $1.5t_{opt}(v_c) \leq 2t_{opt}(v_c)-i$ which leads to $i \leq \frac{t_{opt}(v_c)}{2}$. However, from our selection of $i$, we know that $p+1 \leq i$. Transitively, this means that $p+1 \leq \frac{t_{opt}(v_c)}{2}$, so $ p \leq \frac{t_{opt}(v_c)}{2}-1$. Also, since $t_{opt}(v_c) \geq k+1$, we have that $k \leq t_{opt}(v_c)-1$. Thus, substituting $p$ and $k$ from above,
    \begin{equation*}
        t_A(v_c) = p+k \leq \frac{t_{opt}(v_c)}{2}-1+t_{opt}(v_c)-1 = 1.5t_{opt}(v_c)-2
    \end{equation*}
    This is a contradiction with the assumption that $t_A(v_c) \geq 1.5t_{opt}(v_c)$, so $\frac{t_A(v_c)}{t_{opt}(v_c)} < 1.5$ when $t_A(v_c) < 2k$. Therefore, Algorithm~\ref{alg:simple-k-cycle} is a $(1.5-\epsilon)$-approximation algorithm when $u=v_c$.
\end{proof}
\section{Broadcasting from a Cycle Vertex}\label{sec:broadcasting-case2}

This section discusses the broadcast time generated by \textsc{Simple-K-Cycle} when the originator is not $v_c$. Instead, in this section, we assume that the originator $u$ is found $d$ vertices away from $v_c$ on some cycle $C_m$. Recall, the algorithm informs along the shortest path to $v_c$. Once $v_c$ is informed, it proceeds as in the previous section, but skips the first call to $C_m$ since it is already informed. See Figure~\ref{fig:case2-example}.

\begin{figure}[htb] 
    \centering
    
    \begin{tikzpicture}[scale=1.3, every node/.style={circle, fill=black, inner sep=2pt}]
    
    \node (s) at (0,0) {};
    
    \node (v1) at (-0.7, -0.5) {};
    \node (z1) at (-0.75, -1) {};
    \node (c1_2) at (-0.25, -0.9) {};
    \node (u1) at (0, -0.4) {};
    \draw[solid] plot [smooth, tension=1] coordinates { (s) (v1) (z1) (c1_2) (u1) (s) };
    \node[fill=none, scale=0.8] (c4_name) at (-0.5, -0.75) {$C_3$};
    \node[fill=none, scale=0.8] (u) at (-0.35, -0.3) {4};
    \node[fill=none, scale=0.8] (u) at (-0.95, -0.75) {5};
    \node[fill=none, scale=0.8] (u) at (-0.5, -1.2) {6};
    \node[fill=none, scale=0.8] (u) at (-0.135, -0.6) {7};
    \node[fill=none, scale=0.8] (u) at (-0.025, -0.2) {8};

    \node (v2) at (1, -0.15) {};
    \node (u2) at (0.6, -0.9) {};
    \draw[solid] plot [smooth, tension=1] coordinates { (s) (v2) (u2) (s) };
    \node[fill=none, scale=0.8] (c3_name) at (0.6, -0.4) {$C_4$};
    \node[fill=none, scale=0.8] (c3_name) at (0.275, -0.75) {5};
    \node[fill=none, scale=0.8] (c3_name) at (1.1, -0.6) {6};

    \node (c4_1) at (-0.5, 0.65) {};
    \node (c4_1_1) at (-0.9, 0.9) {};
    \node (c4_2) at (-1.4, 0.95) {};
    \node (c4_2_2) at (-1.85, 0.8) {};
    \node (c4_3) at (-2, 0.4) {};
    \node (c4_4) at (-1.8, -0) {};
    \node (c4_4_2) at (-1.3, -0.2) {};
    \node (c4_5) at (-0.7, -0.1) {};
    \node (vc)[fill=none] at (0.2, -0.1) {$v_c$};
    \draw[solid] plot [smooth, tension=1] coordinates { (s) (c4_1) (c4_1_1) (c4_2) (c4_2_2) (c4_3) (c4_4) (c4_4_2) (c4_5) (s)};
    \node[fill=none, scale=0.8] (c2_name) at (-1.1, 0.4) {$C_2=C_m$};
    \node[fill=none] (u) at (-0.9, 1.1) {$u$};
    \node[fill=none, scale=0.8] (u) at (-0.7, 0.7) {1};
    \node[fill=none, scale=0.8] (u) at (-0.35, 0.35) {2};
     \node[fill=none, scale=0.8] (u) at (-1.15, 0.85) {2};
    \node[fill=none, scale=0.8] (u) at (-1.6, 0.82) {3};
    \node[fill=none, scale=0.8] (u) at (-1.9, 0.6) {4};
    \node[fill=none, scale=0.8] (u) at (-1.9, 0.2) {5};
    \node[fill=none, scale=0.8] (u) at (-1.55, -0.05) {6};
    \node[fill=none, scale=0.8] (u) at (-1.05, -0.05) {7};
    \node[fill=none, scale=0.8] (u) at (-0.4, 0.05) {7};

    \node (c4_0) at (0, 0.7) {};
    \node (c4_0_2) at (0.05, 1.1) {};
    \node (c4_1) at (0.3, 1.45) {};
    \node (c4_1_2) at (0.55, 1.7) {};
    \node (c4_2) at (1, 1.75) {};
    \node (c4_3) at (1.5, 1.4) {};
    \node (c4_4) at (1.5, 0.85) {};
    \node (c4_5) at (1.3, 0.35) {};
    \node (c4_6) at (0.75, 0.2) {};
    \draw[solid] plot [smooth, tension=1] coordinates { (s) (c4_0) (c4_0_2) (c4_1) (c4_1_2) (c4_2) (c4_3) (c4_4) (c4_5) (c4_6) (s)};
    \node[fill=none, scale=0.8] (c1_name) at (0.8, 0.9) {$C_1$};
    \node[fill=none, scale=0.8] (c1_name) at (0.3, 0.2) {3};
    \node[fill=none, scale=0.8] (c1_name) at (1, 0.35) {4};
    \node[fill=none, scale=0.8] (c1_name) at (1.4, 0.65) {5};
    \node[fill=none, scale=0.8] (c1_name) at (1.5, 1.15) {6};
    \node[fill=none, scale=0.8] (c1_name) at (1.275, 1.5) {7};
    \node[fill=none, scale=0.8] (c1_name) at (0.775, 1.65) {8};
    \node[fill=none, scale=0.8] (c1_name) at (0.075, 0.35) {6};
    \node[fill=none, scale=0.8] (c1_name) at (0.075, 0.9) {7};
    \node[fill=none, scale=0.8] (c1_name) at (0.225, 1.225) {8};

    
    \end{tikzpicture}
    
    \caption{An example of a broadcast scheme generated by \textsc{Simple-K-Cycle} on a $k$-cycle graph with $l_1=9, l_2=8, l_3=4, l_4=2$ and originator $u$ with $dist(u, v_c)=d=2$. $u$ informs towards $v_c$ in time unit 1, then informs its other neighbour in time unit 2. $v_c$ is informed in time unit 2. Then, $v_c$ calls cycle $C_1$ in time units 3 and 6, cycle $C_2$ in time unit 7, cycle $C_3$ in time units 4 and 8, and cycle $C_4$ in time unit 5. The broadcast is completed in 8 time units.}
    \label{fig:case2-example}
\end{figure}
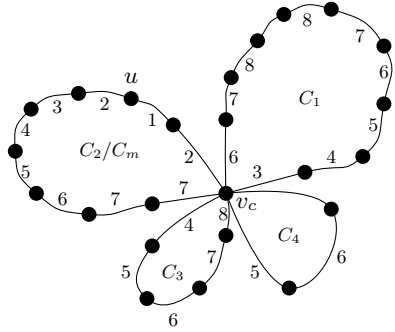

\begin{theorem}\label{th:case2}
    Algorithm~\ref{alg:simple-k-cycle} is a polynomial-time $(1.5-\epsilon)$-approximation algorithm for general $k$-cycle graphs when the originator $u$ is not $v_c$.
\end{theorem}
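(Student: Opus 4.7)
The plan is to mirror the proof of Theorem~\ref{th:case1}, using Lemma~\ref{lma:d_gt_0_lowerbound} in place of Lemma~\ref{lma:lower_bounds_lemma} and separating out the originator's own cycle $C_m$. I would partition on whether $t_A(u) \ge d+2k-1$ (in which case every $C_i$ with $i \ne m$ has received both of its calls from $v_c$) or $t_A(u) < d+2k-1$; the latter sub-case can be resolved by the same contradiction argument used in Case 2 of Theorem~\ref{th:case1}, simply adapted to account for the $+d$ shift and the single skip at $C_m$.

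In the main sub-case, for any $C_i$ with $i \ne m$, the two calls from $v_c$ occur at times $d+i$ (if $i<m$) or $d+i-1$ (if $i>m$) and $d+k-1+i$. Running the same double-sided propagation count that produced equation~(\ref{eq:t_i_from_center}) gives the uniform bound
\[
    t_i \;\le\; \ceil*{\frac{l_i + 2d + 2i + k - 3}{2}}.
\]
Rearranging Lemma~\ref{lma:d_gt_0_lowerbound} into $l_i \le 2(t_{opt}(u)-d) - 2i + 2$ and substituting yields $t_i \le t_{opt}(u) + \ceil*{(k-1)/2}$, the direct analog of equation~(\ref{eq:t_i_lb_case1}). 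Applying Lemma~\ref{lma:d_gt_0_lowerbound} once more with $l_j \ge 2$ and $j$ the largest index different from $m$ gives $t_{opt}(u) \ge d+k-1$, so $\ceil*{(k-1)/2} \le \ceil*{(t_{opt}(u)-d)/2}$, and the same arithmetic as in Theorem~\ref{th:case1} delivers $t_i / t_{opt}(u) \le 1.5 - d/(2\,t_{opt}(u)) < 1.5$, using $d \ge 1$.

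The genuinely new step is the bound for $C_m$, which Lemma~\ref{lma:d_gt_0_lowerbound} explicitly excludes. On $C_m$ the schedule consists of $u$'s two calls at times 1 and 2 (launching two waves along $C_m$) together with a late call from $v_c$ to its remaining neighbour on $C_m$ at time $d+k-1+m$. Tracking when each vertex of $C_m$ is first informed yields $t_m \le \min\bigl(l_m - d + 1,\; \ceil*{(l_m+k+m)/2}\bigr)$. When $t_A(u)=t_m$, the inequalities $t_m \ge t_i$ for every $i \ne m$ force $l_m$ to dominate the other cycle lengths, which pushes the $\min$ onto the second term; I would then compare $\ceil*{(l_m+k+m)/2}$ against the natural lower bound $t_{opt}(u) \ge \ceil*{(l_m+1)/2}$ (obtained by restricting the optimal broadcast to $C_m$), combined with a Lemma~\ref{lma:d_gt_0_lowerbound} bound from some other cycle, to close the ratio below $1.5$. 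I expect this $C_m$ analysis to be the main obstacle, since the naive bound $t_m \le l_m-d+1$ together with only the cycle-alone lower bound yields a ratio approaching $2$; the argument must genuinely exploit the $v_c$-assist branch of $t_m$ whenever $C_m$ is the bottleneck, and combine it with a bound derived from a second cycle to shave strictly below $1.5$.
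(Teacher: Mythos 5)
For the cycles $C_i$ with $i \neq m$, your outline is essentially the paper's proof: the same threshold $t_A(u) \geq 2k+d-1$, the same completion bound $t_i = \ceil*{\frac{2i+k+2d-3+l_i}{2}}$, the same substitution of Lemma~\ref{lma:d_gt_0_lowerbound}, and the same $+d$-shifted contradiction argument when $t_A(u) < 2k+d-1$ (all of which matches the paper, including choosing the smallest bottleneck index $i \geq p+1$ and using $k \leq t_{opt}(u)-1$). One arithmetic slip, though: from $t_{opt}(u) \geq d+k-1$ you only get $\ceil*{\frac{k-1}{2}} \leq \ceil*{\frac{t_{opt}(u)-d}{2}} \leq \frac{t_{opt}(u)-d+1}{2}$, hence a ratio bound of $1.5 - \frac{d-1}{2t_{opt}(u)}$, not the claimed $1.5 - \frac{d}{2t_{opt}(u)}$; when $d=1$ and $t_{opt}(u)-d$ is odd this is not strictly below $1.5$. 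The paper avoids this by using $t_{opt}(u) \geq t_{opt}(v_c) \geq k+1$ (Lemma~\ref{lma:lower_bounds_lemma}), so that $\ceil*{\frac{k-1}{2}} \leq \ceil*{\frac{t_{opt}(v_c)-2}{2}} \leq \frac{t_{opt}(v_c)-1}{2}$, which yields the strict bound $1.5 - \frac{0.5}{t_{opt}(u)}$.

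The genuine gap is exactly the step you flag as the main obstacle: your proposal never actually closes the $C_m$ case, and the route you sketch is doubtful. Playing $t_m \leq \ceil*{\frac{l_m+k+m}{2}}$ against $t_{opt}(u) \geq \ceil*{\frac{l_m+1}{2}}$ plus a bound from a second cycle does not clearly land strictly below $1.5$: for instance with $m=1$ the two sides meet up to rounding terms, so more than bookkeeping would be required. The paper sidesteps the entire analysis with a domination observation you missed: under \textsc{Simple-K-Cycle} with originator $u$, the vertices of $C_m$ are informed no later than under \textsc{Simple-K-Cycle} with originator $v_c$ --- the two waves launched by $u$ at times $1$ and $2$ more than compensate for the first call at time $m$ and for the second call arriving at $d+k+m-1 \geq m+k$. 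Combined with $t_{opt}(u) \geq t_{opt}(v_c)$, Theorem~\ref{th:case1} then gives that $C_m$ is fully informed before $1.5\,t_{opt}(v_c) \leq 1.5\,t_{opt}(u)$, so $C_m$ can simply be excluded as the bottleneck; no wave-tracking on $C_m$, no cycle-restricted lower bound, and no auxiliary second-cycle bound is needed. With that two-sentence argument substituted for your third paragraph (and the $k+1$ lower bound in place of $d+k-1$), your outline becomes the paper's proof.
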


\begin{proof}
Since $u \neq v_c$, we have that $dist(u, v_c) = d \geq 1$. It is clear that all the vertices on cycle $C_m$ are informed no later than when the originator is $v_c$. Additionally, we know that $t_{opt}(u) \geq t_{opt}(v_c)$. This is because a broadcast scheme originating at $v_c$ can fully inform the cycle containing $u$ in the same number of time units as a broadcast scheme originating at $u$. Moreover, all other cycles can be fully informed in a broadcast scheme originating at $v_c$ at least as quickly as in a broadcast scheme originating at $u$ due to the additional time it takes to inform $v_c$ in the latter. So, the optimal broadcast time originating at $u$ must not be less than the optimal broadcast time originating at $v_c$. So, by Theorem~\ref{th:case1}, cycle $C_m$ is fully informed in less than $1.5t_{opt}(v_c) \leq 1.5t_{opt}(u)$ time units. We will therefore only consider the case where the broadcast time of our algorithm is not bounded by cycle $C_m$. This will become useful when we assume that some cycle $C_i$, with $i \neq m$, is fully informed in exactly $t_A(u)$ time units. Again, we will consider two cases: $t_A(u) \geq 2k+d-1$ and $t_A(u) < 2k+d-1$. 

\paragraph*{Case 1 ($t_A(u) \geq 2k+d-1$):}

By the lower bound of Lemma~\ref{lma:d_gt_0_lowerbound}, we have $t_{opt}(u) \geq d + \ceil*{\frac{l_i+2i-2}{2}} \geq d + \frac{l_i+2i-2}{2}$ where $1 \leq i \leq k$, $i \neq m$. This gives us
\begin{equation}\label{eq:l_i_lowerbound_case_2_new}
    2t_{opt}(u)-2d-2i+2 \geq l_i
\end{equation}
In our approximation algorithm, all cycles $C_i$, $i \neq m$, $1 \leq i \leq k$ are informed by $v_c$ by time units $d+i$ and $d+i+k-1$, irrespective of their size relative to cycle $C_m$. By the second call from $v_c$, $C_i$ has $k-1$ informed vertices. Starting at time unit $d+i+k-1$, $C_i$ has two extra vertices informed at every time unit. So, all cycles $C_i$ are fully informed at time unit $t_i = d + k + i - 2 + \ceil*{\frac{l_i-(k-1)}{2}} = \ceil*{\frac{2i+k+2d-3+l_i}{2}}$. Thus, substituting (\ref{eq:l_i_lowerbound_case_2_new}), we get
\begin{align*}
    &t_i \leq \ceil*{\frac{2i+k+2d-3+2t_{opt}(u)-2d-2i+2}{2}} = \ceil*{\frac{k+2t_{opt}(u)-1}{2}} \\
    &t_i \leq \ceil*{\frac{k-1}{2}} + t_{opt}(u)
\end{align*}
Since $t_A(u) = \max_{1 \leq i \leq k}(t_i)$, we have that $t_A(u) \leq \ceil*{\frac{k-1}{2}}+t_{opt}(u)$. Also since $t_{opt}(u)\geq t_{opt}(v_c)$, from the first lower bound of Lemma~\ref{lma:lower_bounds_lemma}, we have $t_{opt}(u)-1 \geq k$. So,

\begin{align*}
    &\frac{t_A(u)}{t_{opt}(u)} \leq \frac{t_{opt}(u) + \ceil*{\frac{k-1}{2}}}{t_{opt}(u)} \leq 1 + \frac{\ceil*{\frac{t_{opt}(u)-2}{2}}}{t_{opt}(u)} \leq 1 + \frac{\frac{t_{opt}(u)-1}{2}}{t_{opt}(u)} \\ &\leq 1 + \frac{0.5t_{opt}(u)-0.5}{t_{opt}(u)} = 1.5 - \frac{0.5}{t_{opt}(u)} < 1.5
\end{align*}

So, \textsc{Simple-K-Cycle} is a $(1.5-\epsilon)$-approximation algorithm when $t_A \geq 2k+d-1$.

\paragraph*{Case 2 ($t_A(u) < 2k+d-1$):}

We will prove this case by contradiction. If $t_A(u) = k+p+d-1$, for some $1 \leq p < k$, then cycles $C_1, ..., C_p$ are informed by $v_c$ twice, while cycles $C_{p+1}, ... C_k$ are only informed once. Cycles $C_1, ..., C_p$ are informed exactly like in case 1, so they must finish before $1.5t_{opt}(u)$. So, consider cases where the broadcast time is bounded by some cycle $C_{p+1}, ..., C_k$.

    Let $i$ be the smallest value $p+1 \leq i \leq k$ such that cycle $C_i$ is only fully informed at time unit $t_A(u)=p+k+d-1$. For the sake of contradiction, assume $t_A(u)=k+p+d-1 \geq 1.5t_{opt}(u)$. In this case, cycle $C_i$ is fully informed at exactly $t_A(u)$, so $t_i=t_A(u)$. Note that all cycles $C_j$, $p+1 \leq j \leq k$ receive a call from $v_c$ by time unit $d+j$ independent of whether $j < m$ or $j > m$. So, we can assume that cycle $C_i$ receives a call from $v_c$ at time $d+i$. Note that proving this case implicitly proves the case where $C_i$ is informed earlier than time unit $d+i$; this is the case when $m<i$, causing $C_i$ to be informed in time unit $d+i-1$. So, $t_A (u)= d+i-1+l_i$. Using (\ref{eq:l_i_lowerbound_case_2_new}) and the above, we get
    \begin{align*}
        &t_A(u)=p+k+d-1=d+i-1+l_i\\
        &t_A(u)=p+k+d-1\leq d+i-1+2t_{opt}(u)-2i-2d+2\\
        &t_A(u)=p+k+d-1\leq 2t_{opt}(u)-i-d+1
    \end{align*}
    From our assumption that $k+p+d-1 \geq 1.5t_{opt}(u)$ and above,
    \begin{align*}
        &1.5t_{opt}(u) \leq p+k+d-1 \leq 2t_{opt}(u)-i-d+1
    \end{align*}
    Then, $1.5t_{opt}(u) \leq 2t_{opt}(u)-i-d+1$ which leads to $i \leq \frac{t_{opt}(u)}{2}-d+1$.
    However, from our selection of $i$ we know that $p+1 \leq i$. Transitively, this means that $p+1 \leq \frac{t_{opt}(u)}{2}-d+1$. Thus, $p \leq \frac{t_{opt}(u)}{2}-d$. Also, since $t_{opt}(u) \geq k+1$, we have that $k \leq t_{opt}(u)-1$. Thus, substituting $p$ and $k$ from above,
    \begin{align*}
        &t_A(u) = p+k+d-1 \leq \frac{t_{opt}(u)}{2} - d+t_{opt}(u)-1+d-1 = 1.5t_{opt}(u)-2
    \end{align*}
    This is a contradiction with the assumption that $t_A(u) \geq 1.5t_{opt}(u)$, so $\frac{t_A(u)}{t_{opt}(u)} < 1.5$ when $t_A(u) < 2k+d-1$. Therefore, \textsc{Simple-K-Cycle} is a $(1.5-\epsilon)$-approximation algorithm when $u \neq v_c$.
\end{proof}
\section{Performance of \textsc{Simple-K-Cycle} in Specific Cases}\label{sec:particular-cases}
In this section, we present two examples of graphs that demonstrate that our approximation ratio calculated for \textsc{Simple-k-cycle} is tight, and that \textsc{Simple-k-cycle} is optimal in some cases. For both examples, let $t_A(v_c)$ be the broadcast time of $v_c$ in $G$ under the broadcast scheme generated by \textsc{Simple-K-Cycle}. Also, let $t_{opt}(v_c)$ be the optimal broadcast time of $v_c$ in $G$. In both examples, we use the fact that any broadcast scheme in which all cycles are fully informed in exactly the last time unit is optimal, which we state without a full proof. 

\begin{theorem}
    The approximation ratio of the \textsc{Simple-K-Cycle} algorithm originating at $v_c$ asymptotically approaches $1.5$ for $k$-cycle graphs having $l_1=2k+2$ and $l_2=l_3=...=l_k=2$ as the value of $k$ increases.
\end{theorem}

\begin{proof}
If $l_1=2k+2$ and $l_2=l_3=...=l_k=2$, then the optimal broadcast scheme is to inform $C_1$ in time units 1 and 2, then inform each cycle $C_i$ in time unit $i+1$. This leads to cycles $C_2, C_3, ..., C_{k-1}$ being fully informed before time unit $k+2$, and cycles $C_1, C_k$ being fully informed in exactly time unit $k+2$. It is simple to see why $C_k$ is fully informed in this time unit, for cycle $C_1$, the time unit in which it is fully informed is calculated as $t_{opt}(v_c)=1+\ceil{\frac{l_1-1}{2}}=1+\ceil{\frac{2k+2-1}{2}}=\ceil{\frac{2k+3}{2}}=k+2$.

In our approximation algorithm, all cycles $C_2, C_3, ..., C_k$ will be fully informed by time unit $k+1$. Cycle $C_1$ will be fully informed in time unit $t_{A}(v_c)=k+\ceil{\frac{l_1-k}{2}}=k+\ceil{\frac{2k+2-k}{2}}=\ceil{\frac{3k+2}{2}}$. It follows that $\frac{3k+2}{2} \leq t_{A}(v_c)\leq \frac{3k+3}{2}$. Then, we get the following:

\begin{align*}
    &\frac{\frac{3k+2}{2}}{k+2}\leq \frac{t_A(v_c)}{t_{opt}(v_c)} \leq \frac{\frac{3k+3}{2}}{k+2}\\
    &\frac{3k+2}{2k+4}\leq \frac{t_A(v_c)}{t_{opt}(v_c)} \leq \frac{3k+3}{2k+4}
\end{align*}

$\frac{3k+3}{2k+4} < 1.5$ for all $k\geq 1$, and $\lim_{k \to \infty}\frac{3k+2}{2k+4}=\lim_{k \to \infty}\frac{3k+3}{2k+4}=\frac{3}{2}=1.5$. By the squeeze theorem, the ratio $\frac{t_A(v_c)}{t_{opt}(v_c)}$ also approaches $1.5$ as $k$ approaches infinity. So, for this particular graph family, our algorithm's approximation ratio approaches, but stays strictly below $1.5$. 

\end{proof}

\begin{theorem}
    For any integer j, where $1 \leq j < k$, \textsc{Simple-k-cycle}, with originator $v_c$, is an exact algorithm for any $k$-cycle  graph $G$ whose cycle lengths are structured as follows:
        \item $l_i-l_{i+1}=2$ for all $1 \leq i \leq j$
        \item $l_i-l_{i+1}=1$ for all $j < i < k$
\end{theorem}

\begin{proof}

Let $t_A = p+k$, where $1 \leq p$. In this case, all cycles $C_i$, $1 \leq i \leq p$, $i \leq k$ are informed twice, while all other cycles ($C_j$, $p < j \leq k$), are informed only once.
All cycles $C_i$ are informed in time units $i$ and $k+i$. Since one vertex on cycle $C_i$ is informed in each time unit between $i$ and $k+i$ and two vertices are informed in each time unit after time unit $k+i$, we can determine the total number of informed vertices by time unit $t_A$ by summing the vertices informed before time unit $k+i$ ($k$ vertices) and after time unit $k+i$: $2((p+k)-(k+i-1))=2(p-i+1)$ vertices. So, $C_i$ has $k+2(p-i+1)=k+2p-2i+2$ informed vertices by time unit $k+p=t_A$. By the same logic, every cycle $C_j$ is informed only in time unit $j$. So, they have $p+k-(j-1)=p+k-j+1$ informed vertices by time unit $p+k$. 

As a result, if all $l_i=k+2p-2i+2$ and all $l_j=p+k-j+1$ for some fixed $p$, all cycles will be informed in exactly $k+p$ time units, and the broadcast scheme will be optimal. We can then see that a set of cycle lengths follows this pattern iff 
    \item $l_i-l_{i+1}=(k+2p-2i+2)-(k+2p-2(i+1)+2)=2$ for all $1 \leq i < p, i<k$.
    \item $l_j-l_{j+1}=(p+k-j+1)-(p+k-(j+1)+1)=1$ for all $p< j < k$.
    \item $l_p-l_{p+1}=(k+2p-2p+2)-(p+k-(p+1)+1)=2$ if $p < k$.
\end{proof}
\section{Conclusion}
\label{sec:conclusion}

In this paper, we designed a simple, linear-time algorithm for the broadcast time problem in general $k$-cycle graphs. Further, we showed that the given algorithm is a $(1.5-\epsilon)$-approximation algorithm. This improves the previously best known fixed-ratio approximation guarantee of 2, found in \cite{k-cycle-2-approximation} and generalized for cactus graphs in \cite{cactus-telephone-broadcasting}. It also represents the simplest algorithm with the lowest time-complexity for achieving a 1.5-approximation, even considering the recent PTAS found in \cite{bringolf2026hardnessapproximationbroadcastingstructured}. Moreover, due to the known NP-completeness of the problem \cite{bringolf2026hardnessapproximationbroadcastingstructured} and the existence of a PTAS, a natural direction for future work is to improve the approximation ratio of efficient fixed-ratio algorithms. Another future direction is to extend this $(1.5-\epsilon)$-approximation ratio to general cactus graphs.
\begin{credits}

\subsubsection{\discintname}
The authors have no competing interests to declare that are relevant to the content of this article.
\end{credits}

\bibliography{bibliography}

\end{document}